\documentclass[10pt,a4paper]{article}
% correct bad hyphenation here
\usepackage[a4paper, total={6in, 8in}]{geometry}
\makeatother

\makeatother
% correct bad hyphenation here
\usepackage{graphicx}
\usepackage{amsthm}
\usepackage{amsmath}
\usepackage{tabu}
\usepackage{breqn}
\usepackage{verbatim}
\usepackage{caption}
\usepackage{subcaption}
\usepackage{esvect}
\usepackage{cite}
\usepackage{amssymb}
\usepackage{footnote}
\usepackage{color}
\usepackage{url}
\usepackage[makeroom]{cancel}
\usepackage{xcolor}
\usepackage{todonotes}
\newtheorem{theorem}{Theorem}[section]

\newtheorem{lemma}[theorem]{Lemma}
\newtheorem{proposition}[theorem]{Proposition}

\newtheorem{remark}{Remark}
\newtheorem*{problem statement}{Problem Statement}

\newtheorem{assumption}{Assumption}
%\graphicspath{{./Figures/}}

\pdfminorversion=4
\newcommand{\real}{\mathbb{R}}

\newcommand{\lie}{\mathcal{L}}

\newcommand{\Sset}{\mathcal{S}}
\newcommand{\ball}{\mathbb{B}}

\newcommand{\proj}{\mathrm{proj}}

\DeclareMathOperator*{\argmin}{arg\,min}

\title{On the relationship between control barrier functions and projected dynamical systems}
\author{Giannis Delimpaltadakis and W.P.M.H. Heemels\thanks{The authors are with the Control Systems Technology group, Mechanical Engineering, Eindhoven University of Technology. Emails:\texttt{\{i.delimpaltadakis, w.p.m.h.heemels\}@tue.nl}. \newline \indent This research received funding from the European Research Council (ERC) under the Advanced ERC grant agreement PROACTHIS, no. 101055384.}}
\date{}

\begin{document}

\maketitle
\begin{abstract}
    In this paper, we study the relationship between systems controlled via Control Barrier Function (CBF) approaches and a class of discontinuous dynamical systems, called \emph{Projected Dynamical Systems} (PDSs). In particular, under appropriate assumptions, we show that the vector field of CBF-controlled systems is a Krasovskii-like perturbation of the set-valued map of a differential inclusion, that abstracts PDSs. This result provides a novel perspective to analyze and design CBF-based controllers. Specifically, we show how, in certain cases, it can be employed for designing CBF-based controllers that, while imposing safety, preserve asymptotic stability and do not introduce undesired equilibria or limit cycles. Finally, we briefly discuss about how it enables continuous implementations of certain projection-based controllers, that are gaining increasing popularity.
\end{abstract}

\section{Introduction}
Control Barrier Functions (CBFs) have been introduced as a systematic methodology to impose safety on control systems \cite{ames2019control_review,ames2016control,reis2020control,tan2021undesired,wences2022compatibility,mestres2022optimization,yi2022complementarity_cbf}. Typically, given a nominal controller\footnote{Alternatively, a Control Lyapunov Function (CLF, see \cite{ames2019control_review}) is given, and the controller is designed by employing both the CBF and the CLF.}, which asymptotically stabilizes a given system, a modified version of it is designed, such that the system's trajectories never leave a safe set. The safe controller is specified as the solution of a Quadratic Program (QP), with its constraint, termed the CBF constraint, preventing leaving the safe set. The cost matrix in the QP's objective function is subject to design. A notorious problem in CBF-based controllers is that, for the sake of safety, undesired equilibria or limit cycles are often introduced, and thus stability is compromised \cite{reis2020control,tan2021undesired,wences2022compatibility,mestres2022optimization,yi2022complementarity_cbf}.

A seemingly unrelated topic is that of Projected Dynamical Systems (PDSs), a class of discontinuous constrained dynamical systems \cite{nagurney1995projected,heemels2000complementarity,brogliato2006equivalence,hauswirth2020anti,hauswirth2021projected,deenen2021projection,heemels2023existence,shi2022negative,lorenzetti2022pi,fu2023novel,heemels2020oblique}. They have been introduced to study dynamic aspects of constrained problems, such as variational inequalities \cite{nagurney1995projected} (e.g., traffic networks, oligopolies, energy markets etc.) and constrained optimization \cite{hauswirth2020anti,hauswirth2021projected}. Recently, they have, also, gained considerable interest for control purposes. They have been used to analyze a new class of hybrid integrators, called HIGS, which overcome limitations of linear controllers \cite{deenen2021projection,heemels2023existence,shi2022negative}. Moreover, they have been used in antiwindup control \cite{lorenzetti2022pi}, passivity-based control \cite{fu2023novel} and constrained observers \cite{heemels2020oblique}.

\subsection*{Contributions}
In this work, we investigate the relationship between CBFs and PDSs. The main result of this work can be summarised as: under suitable assumptions, we prove that the vector field of a system controlled via CBF methods is a Krasovskii-like perturbation (see \cite[Definition 6.27]{hybrid_book}) of the set-valued map of a differential inclusion (DI), that abstracts PDSs. In fact, we provide a \emph{quantitative} bound on that perturbation, that depends on the tunable parameter $a$ of the CBF constraint and is decreasing and vanishing with $a$.

This result provides a new perspective to analyze and design CBF-based controllers. To demonstrate this, we discuss how it can be used to design CBF-based controllers that preserve asymptotic stability of the nominal ones, while imposing safety and not introducing undesired equilibria or limit cycles. Specifically, in the simple scenario of complete control over the system dynamics ($\dot{x}=f(x)+u$), under the assumption of an incrementally stabilizing nominal controller and a convex safe set (although this can be relaxed; see Remark \ref{rem:simplifying_assumptions}), by picking $a$ sufficiently large and choosing an appropriate cost matrix for the QP, the CBF-based controller guarantees asymptotic stability and does not introduce undesired equilibria or limit cycles. This design procedure is showcased through a numerical example, which highlights that proper selection of the QP's cost matrix is an important design step, as poor choice thereof leads to loss of stability.

Finally, our result has other implications as well, which are not analyzed here, but are subject of future work. For example, as we mention in Section \ref{sec:future_work}, it enables continuous implementations of PDSs, in the form of CBF-controlled systems. This is of particular interest for projection-based controllers \cite{deenen2021projection,heemels2023existence,shi2022negative,lorenzetti2022pi,fu2023novel}, as continuous implementations of them can be beneficial, e.g., for additional robustness.

\subsection*{Related work}
In \cite{allibhoy2023control}, in the context of gradient flows (i.e., when the unconstrained dynamics is the gradient of an optimization problem's objective function), it is shown that, as $a\to\infty$, the CBF-controlled system tends to a PDS. Nonetheless, contrary to ours, this result refers only to the limit case, without providing a quantitative bound on the ``closeness" between the PDS and the CBF-controlled system. In fact, the result in \cite{allibhoy2023control} is partially\footnote{``Partially", as we consider the DI-representation of PDSs, whereas \cite{allibhoy2023control} considers the original discontinuous PDS vector field.} recovered by our result.

In \cite{hauswirth2020anti}, it is shown that a specific type of antiwindup control systems is a Krasovskii-like perturbation of PDSs. This result is subsequently used to derive algorithms for online feedback optimization with input constraints. Here we establish that a similar relationship exists between PDSs and CBF-controlled systems, which is, in general, a different class than the one of antiwindup control systems. Nonetheless, these two results suggest that the relationship between antiwindup control and CBFs is also of interest.

The problem of preserving asymptotic stability and eliminating undesired equilibria in CBF-based methods has received quite some interest lately \cite{reis2020control,tan2021undesired,wences2022compatibility,mestres2022optimization,yi2022complementarity_cbf}. Nonetheless, none of \cite{reis2020control,tan2021undesired,wences2022compatibility,mestres2022optimization,yi2022complementarity_cbf} approaches the problem by analyzing the PDS that is related to the CBF-controlled system. Thus, while there is no clear indication if our approach outperforms any of \cite{reis2020control,tan2021undesired,wences2022compatibility,mestres2022optimization,yi2022complementarity_cbf}, it provides a novel promising method and perspective to reason about and design CBF-controlled systems.

\section{Notation and Preliminaries}
Given a set $\Sset\subseteq\real^n$, denote its boundary by $\partial\Sset$ and its interior by $\mathrm{Int}(\Sset)$. Moreover, given any $x\in\real^n$, its Euclidean distance to $\Sset$ is $d(x,\Sset):= \min_{y\in\Sset}\|x-y\|$, where $\|\cdot\|$ denotes the Euclidean norm. Its projection to $\Sset$ is $\proj_{\Sset}(x) :=\argmin_{y\in\Sset}\|x-y\|$. The tangent cone to $\Sset$ at $x\in\Sset$, denoted by $T_\Sset(x)$, is the set of all vectors $w \in\real^n$ for which there exist sequences $\{x_i\}\in\Sset$ and $\{t_i\}$, $t_i>0$, with $x_i\to x$, $t_i\to 0$ and $i\to\infty$, such that $w=\lim_{i\to\infty}\frac{x_i-x}{t_i}$. Define the normal cone\footnote{As we work with sets satisfying \emph{constraint qualification conditions} (see, e.g., \cite{allibhoy2023control}), we do not distinguish between different kinds of normal cones.} of $\Sset$ at $x\in\Sset$ as $N_{\Sset}(x):=\{\eta\in\real^n\mid\eta^\top v\leq0, \text{ }\forall v\in T_\Sset(x)\}$. For more information on tangent and normal cones, see \cite{rockafellar2009variational}.

Denote the set of positive-definite symmetric matrices in $\real^{n\times n}$ by $\mathbb{S}_+^n$. Given $P\in\mathbb{S}_+^n$, denote its minimum and maximum eigenvalues by $\underline{\lambda}(P)$ and $\overline{\lambda}(P)$, respectively. Given $P\in\mathbb{S}_+^n$ and $x,y\in\real^n$, denote $\langle x\mid y\rangle_P=x^\top P y$ and $\|x\|_P=\sqrt{x^\top P x}$. Given $P\in\mathbb{S}_+^n$, a set-valued map $f:\real^n\rightrightarrows\real^n$ is called \emph{strongly $P$-monotone}, if for any $x,y\in\real^n$ and any $x'\in f(x)$, $y'\in f(y)$, one has $\langle x-y\mid x'-y'\rangle_P\geq \alpha\|x-y\|^2_P$, for some $\alpha>0$. Given a function $f:\real^n\to\real^{n\times m}$ and a differentiable function $h:\real^n\to\real$, denote $\lie_fh(x):=\nabla h^\top(x)\cdot f(x)$. A continuous function $\gamma:[0,\infty)\to\real$ is said to belong to $\mathcal{K}_\infty$, if $\gamma(0)=0$, $\gamma$ is strictly increasing, and $\lim_{a\to\infty}\gamma(a)=+\infty$.

\section{Control Barrier Functions and Projected Dynamical Systems}
Both CBF-based controlled systems and PDSs are systems constrained in some set $\Sset$. As commonly done in the literature \cite{ames2019control_review,ames2016control,reis2020control,tan2021undesired,wences2022compatibility,mestres2022optimization,yi2022complementarity_cbf}, we consider sets given as super-zero-level-sets of a \emph{control barrier function} $h:\real^n\to\real$: $$\Sset = \{x\in\real^n\mid h(x)\geq 0\}$$
We, also, make the following assumptions on $h$ and $\Sset$:
\begin{assumption}\label{assum:sset_and_h} $\Sset$ and $h$ satisfy the following:
\begin{enumerate}
    \item \label{assum:compactness} $\Sset$ is nonempty, compact, and $0\in\Sset$.
    \item \label{assum:h_C11} $h$ is continuously differentiable, $x\mapsto\nabla h(x)$ is locally Lipschitz, and its Lipschitz constant on $\Sset$ is $L_{\nabla h}$.
    \item \label{assum:regularity} For all $x\in\real^n$ such that $h(x)=0$, we have that $\nabla h(x)\neq0$.
    \item \label{assum:Kinf} There exists $\gamma\in\mathcal{K}_\infty$, such that $d(x,\partial\Sset)\leq\gamma(|h(x)|)$, for all $x\in\Sset$.
\end{enumerate}
\end{assumption}
Most of the above assumptions are standard in the literature of CBFs (that is, items \ref{assum:h_C11} and \ref{assum:regularity}, as well as item \ref{assum:compactness}, without the compactness assumption). Compactness of $\Sset$ is needed for several bounds in the proof of Theorem \ref{thm:main_result}. Extending Theorem \ref{thm:main_result} to non-compact sets is subject of future investigations. Moreover, item \ref{assum:Kinf} holds a-priori, if $h$ is real-analytic, by the \emph{Łojasiewicz inequality} (e.g., see \cite{tiep2012lojasiewicz}).

Consider the following ``nominal" (unconstrained) system:
\begin{equation}\label{eq:sys_nominal}
    \dot{x} = f(x)
\end{equation}
where $f:\real^n\to\real^n$. 
\begin{assumption}\label{assum:f_lipschitz}
$f$ is locally Lipschitz, and its Lipschitz constant on $\Sset$ is $L_{f}$. Moreover, the origin is the only equilibrium of \eqref{eq:sys_nominal} in $\Sset$.
\end{assumption}
Both CBF-based control methods and PDSs start from an unconstrained system \eqref{eq:sys_nominal} and end up with one that is constrained in $\Sset$. Given a tunable parameter $a>0$ and a $P\in\mathbb{S}_+^{n}$, consider
\begin{equation}\label{eq:sys_cbf}
    \dot{x} = f_{\mathrm{cbf},a}(x) := \left\{\begin{aligned}\argmin_\mu \quad&\|\mu-f(x)\|^2_P\\
    \mathrm{s.t.:} \quad &\lie_\mu h(x) + ah(x) \geq 0\end{aligned}\right. 
\end{equation}
Under the stated assumptions, the Quadratic Program (QP) in the above equation has a unique solution, for any $x$, which can be written in closed form as follows (see, e.g., \cite[Theorem 1]{wences2022compatibility}):
\begin{equation}\label{eq:f_cbf}
    f_{\mathrm{cbf},a}(x) = f(x) - \min\Big(0,\lie_f h(x) + ah(x)\Big)\frac{P^{-1}\nabla h(x)}{\|\nabla h(x)\|^2_{P^{-1}}}
\end{equation}
Essentially, the unconstrained vector-field is modified whenever $\lie_f h(x) + ah(x)\leq 0$, i.e., whenever the value of $h$ decreases along the trajectories of the nominal system \eqref{eq:sys_nominal} faster than a state-dependent tunable threshold $-ah(x)$. Vector fields such as \eqref{eq:sys_cbf} and \eqref{eq:f_cbf} are basically the closed-loop vector-fields in control systems of the form $\dot{x} = f(x)+u(x)$, where the controller $u(x)$ is designed via CBF-based methods. For more details, see the discussion in Section \ref{sec:design}. From standard CBF theory \cite{ames2019control_review}, it follows that trajectories of \eqref{eq:sys_cbf} starting in $\Sset$ stay in $\Sset$. Moreover, $f_{\mathrm{cbf},a}$ is locally Lipschitz, thereby implying existence and uniqueness of solutions of \eqref{eq:sys_cbf}.

Now, consider the discontinuous dynamical system
\begin{equation}\label{eq:sys_pds}
\dot{x} = f_{\mathrm{pds}}(x) := \left\{\begin{aligned}&f(x), \quad &x\in \mathrm{Int}(\Sset)\\&&\\
&\argmin_{\mu\in T_{\Sset}(x)} \|\mu-f(x)\|^2_P, \quad &x\in\partial\Sset
    \end{aligned}\right.
\end{equation}
Systems of the form \eqref{eq:sys_pds} are called \emph{Projected Dynamical Systems} (PDSs) \cite{nagurney1995projected}. In contrast to \eqref{eq:sys_cbf}, in PDSs, the modification of the vector field takes place \emph{only} at the boundary of $\Sset$. Specifically, at $x\in\partial\Sset$, a vector $\mu$ in the tangent cone of $\Sset$ is chosen (the one that minimizes $\|\mu-f(x)\|_P^2$), thus keeping trajectories inside $\Sset$. Under Assumptions \ref{assum:sset_and_h} and \ref{assum:f_lipschitz}, solutions of the PDS \eqref{eq:sys_pds} are equivalent to the solutions of the following Differential Inclusion (DI) (see \cite{hauswirth2021projected}):
\begin{equation}\label{eq:pds_di}
    \dot{x}\in F(x) := f(x)-P^{-1}N_{\Sset}(x), \quad x\in\Sset
\end{equation}
In what follows, we focus on \eqref{eq:pds_di}, as it is more amenable to analysis.

\subsection{Main result}
The main result of this work, Theorem \ref{thm:main_result} below, indicates that \eqref{eq:sys_cbf} and \eqref{eq:pds_di} are intimately related: $f_{\mathrm{cbf},a}$ belongs in a Krasovskii-like perturbation of the set-valued map $F$.
\begin{theorem}\label{thm:main_result}
Let Assumptions \ref{assum:sset_and_h} and \ref{assum:f_lipschitz} hold. For an arbitrarily small $\epsilon$, such that $0<\epsilon<\min_{z\in\partial\Sset}\|\nabla h(z)\|$, define
\begin{equation*}\label{eq:a_1}
        a_* = \frac{\max_{z\in\Sset}|\lie_f h(z)|}{\gamma^{-1}\Big(\frac{\min_{z\in\partial\Sset}\|\nabla h(z)\|- \epsilon}{L_{\nabla h}}\Big)}
\end{equation*}
Moreover, denote
\begin{align*}
    &M_1:= \min_{z\in\partial\Sset}\|\nabla h(z)\|\\
    &M_2:=\max_{z\in\partial\Sset}\|\nabla h(z)\|\\
    %&M_3:=M_1 - L_{\nabla h}\gamma(\frac{1}{a_1}\max_{z\in\Sset}|\lie_f h(z)|)\\
    &M_3:= M_2 + L_{\nabla h}\gamma(\frac{1}{a_*}\max_{z\in\Sset}|\lie_f h(z)|)\\
    &L_1 := \frac{\overline{\lambda}(P)}{\underline{\lambda}(P)\epsilon^2}L_{\nabla h}\biggl[1 + \frac{M_2\overline{\lambda}(P)\Big(M_2+M_3\Big)}{\underline{\lambda}(P)M_1^2}\bigg]
\end{align*}
Finally, define
\begin{align*}
    \sigma(a,x) := \max\bigg\{&\gamma(\frac{1}{a}|\lie_f h(x)|),\\ &\Big(L_f+L_1|\lie_f h(x)|\Big)\gamma(\frac{1}{a}|\lie_f h(x)|)\bigg\}
\end{align*}
Then, for any $a\geq a_*$, it holds that for all $x\in\Sset$:
\begin{equation*}
    f_{\mathrm{cbf},a}(x) \in K_a(F(x))
\end{equation*}
where $K_a(F(x)):=F\Big((x+\sigma(a,x)\ball)\cap\Sset\Big) + \sigma(a,x)\ball$ and $\ball\subseteq\real^n$ is the closed unit ball.
\end{theorem}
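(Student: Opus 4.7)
The plan is a two-case analysis combined with a boundary-projection argument. By Assumption \ref{assum:sset_and_h}.\ref{assum:regularity}, $N_\Sset(x)=\{0\}$ on $\mathrm{Int}(\Sset)$ and $N_\Sset(x)=\{-\lambda\nabla h(x):\lambda\geq 0\}$ on $\partial\Sset$, so $F(x)$ equals $\{f(x)\}$ or $\{f(x)+\mu P^{-1}\nabla h(x):\mu\geq 0\}$ accordingly. Rewriting \eqref{eq:f_cbf} as $f_{\mathrm{cbf},a}(x)=f(x)+\lambda_x P^{-1}\nabla h(x)$ with $\lambda_x:=[-\lie_f h(x)-ah(x)]^+/\|\nabla h(x)\|_{P^{-1}}^2\geq 0$, the easy case is $\lambda_x=0$, where $f_{\mathrm{cbf},a}(x)=f(x)\in F(x)$ and one takes $y=x$ with no perturbation. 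When $\lambda_x>0$, necessarily $\lie_f h(x)<0$ and $0\leq h(x)\leq |\lie_f h(x)|/a$, so Assumption \ref{assum:sset_and_h}.\ref{assum:Kinf} gives $d(x,\partial\Sset)\leq\gamma(|\lie_f h(x)|/a)\leq\sigma(a,x)$; I would then take $y:=\proj_{\partial\Sset}(x)$. The threshold $a_*$ is tuned precisely so that, for $a\geq a_*$, Lipschitz continuity of $\nabla h$ together with $\|\nabla h\|\geq M_1$ on $\partial\Sset$ propagate to $\|\nabla h(x)\|\geq\epsilon$ throughout the modification region, which via $\|v\|_{P^{-1}}^2\geq\|v\|^2/\overline\lambda(P)$ bounds $\lambda_x\leq |\lie_f h(x)|\overline\lambda(P)/\epsilon^2$.

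For the candidate $F_y\in F(y)$, I would set $F_y:=f(y)+\mu_y P^{-1}\nabla h(y)$ with $\mu_y\geq 0$ chosen to approximate $\lambda_x P^{-1}\nabla h(x)$ as closely as possible, and decompose the error as
\begin{equation*}
    f_{\mathrm{cbf},a}(x)-F_y = \bigl(f(x)-f(y)\bigr)+\lambda_x P^{-1}\bigl(\nabla h(x)-\nabla h(y)\bigr)+(\lambda_x-\mu_y)P^{-1}\nabla h(y).
\end{equation*}
Lipschitz bounds on $f$ and $\nabla h$, combined with the bound on $\lambda_x$ and $\|P^{-1}\|_{\mathrm{op}}=1/\underline\lambda(P)$, control the first two summands by a constant multiple of $\|x-y\|$ producing the leading factor $\overline\lambda(P)L_{\nabla h}/(\underline\lambda(P)\epsilon^2)$ in $L_1$. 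The third, \emph{coefficient-matching} summand is the delicate one: bounding $|\lambda_x-\mu_y|\,\|P^{-1}\nabla h(y)\|$ while enforcing $\mu_y\geq 0$ requires the uniform upper bound $\|\nabla h(\cdot)\|\leq M_3$ on the modification region (via Lipschitz propagation from $\partial\Sset$), the boundary bounds $M_1\leq\|\nabla h(\cdot)\|\leq M_2$ on $\partial\Sset$, and the norm inequality $\|\nabla h(y)\|_{P^{-1}}^{-2}\leq\overline\lambda(P)/M_1^2$; this is what produces the extra factor $[1+M_2\overline\lambda(P)(M_2+M_3)/(\underline\lambda(P)M_1^2)]$ in $L_1$.

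Combining, $\|f_{\mathrm{cbf},a}(x)-F_y\|\leq (L_f+L_1|\lie_f h(x)|)\gamma(|\lie_f h(x)|/a)\leq\sigma(a,x)$, and with $\|x-y\|\leq\sigma(a,x)$ and $F_y\in F(y)$ this gives $f_{\mathrm{cbf},a}(x)\in K_a(F(x))$. The main obstacle will be the coefficient-matching step: choosing $\mu_y\geq 0$ so that $|\lambda_x-\mu_y|$ is uniformly controllable without destroying the nonnegativity required for membership in $F(y)$, since $\lambda_x$ depends on $ah(x)$ (which does not vanish even as $a\to\infty$, precisely because $h(x)\sim 1/a$ in the modification region). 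The remaining ingredients — the case split, the $\|\nabla h(x)\|\geq\epsilon$ bound driven by $a\geq a_*$, and the Lipschitz-perturbation estimates — are relatively routine once $M_1,M_2,M_3$ have been introduced.
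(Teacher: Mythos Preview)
Your plan is correct and matches the paper's approach: the same two-case split, the same boundary projection $y\in\proj_{\partial\Sset}(x)$, the same use of Assumption~\ref{assum:sset_and_h}.\ref{assum:Kinf} to bound $\|x-y\|$, and the same role of $a_*$ in propagating $\|\nabla h\|\geq\epsilon$ from $\partial\Sset$ into the modification region.

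The only difference is in how the correction term is handled, and here your ``main obstacle'' is not an obstacle at all. The paper does not leave $\mu_y$ free: it sets
\[
\eta \;=\; \bigl(\lie_f h(x)+ah(x)\bigr)\,\frac{\nabla h(y)}{\|\nabla h(y)\|^2_{P^{-1}}}\in N_\Sset(y),
\]
i.e.\ $\mu_y = -(\lie_f h(x)+ah(x))/\|\nabla h(y)\|^2_{P^{-1}}\geq 0$, keeping the \emph{same} scalar $\lie_f h(x)+ah(x)$ in both terms. The difference $f_{\mathrm{cbf},a}(x)-\bigl(f(y)-P^{-1}\eta\bigr)$ then reduces to $f(x)-f(y)$ plus $|\lie_f h(x)+ah(x)|$ times the difference of the full normalized-gradient maps $z\mapsto P^{-1}\nabla h(z)/\|\nabla h(z)\|^2_{P^{-1}}$; a direct Lipschitz-type estimate on that map (the paper's Lemma~\ref{lem:normalized_grad_lipschitz}) is precisely what produces the constant $L_1$ with its bracketed factor. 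Your three-term decomposition is equivalent once you make this choice of $\mu_y$, and the nonnegativity concern disappears because the sign of $\lie_f h(x)+ah(x)$ is fixed in Case~2. (Incidentally, the even simpler choice $\mu_y=\lambda_x$ kills your third term outright and would yield a constant smaller than $L_1$; the paper's route is what gives the specific $L_1$ in the statement.) Your worry that ``$ah(x)$ does not vanish as $a\to\infty$'' is a red herring: in Case~2 one has $0\leq ah(x)\leq -\lie_f h(x)$, so $|\lie_f h(x)+ah(x)|\leq|\lie_f h(x)|$, a bound you already used when controlling $\lambda_x$.
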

\begin{proof}
    See Appendix.
\end{proof}
Observe that, due to Assumption \ref{assum:sset_and_h} item \ref{assum:regularity}, $M_1>0$, and, hence, there is always an $\epsilon$ such that $0<\epsilon<\min_{z\in\partial\Sset}\|\nabla h(z)\|$. Moreover, $\sigma$ is continuous on both arguments, strictly decreasing on $a$, and satisfies $\sigma(a,x)\geq0$, $\sigma(a,0)=0$ and $\lim_{a\to\infty}\sigma(a,x)=0$. As such, the larger $a$ is picked, the closer is the dynamics of \eqref{eq:sys_cbf} to the DI \eqref{eq:pds_di}.

\section{Preserving Stability in CBF-based Control}\label{sec:design}
As Theorem \ref{thm:main_result} provides \emph{quantitative} information, that depends on $a$, on how much of a perturbation $f_{\mathrm{cbf},a}$ is to $F$, and as that perturbation is vanishing as $a$ increases, by tuning $a$ sufficiently large, robust properties of $F$ may be transferred to $f_{\mathrm{cbf},a}$. In this section, we sketch how this perspective can be exploited to design the parameters of CBF-based controllers, such that they retain asymptotic stability of a given nominal controller, without introducing undesired equilibria or limit cycles, while guaranteeing safety. First, we provide a short discussion on the design methodology, and then we demonstrate it on a numerical example.

\subsection{Sketch of the design methodology}
We consider a control system $\dot{x}=f(x)+u$, for which a nominal control law\footnote{In the CBF literature, it is common that a nominal stabilizing (but, generally, unsafe) controller $u_0$ is given, and the CBF-based controller is derived as a modification of the nominal one (see, e.g., \cite{ames2019control_review,wences2022compatibility}).} $u_0:\Sset\to\real^n$ has been designed, without taking any safety considerations into account:  
\begin{equation}\label{eq:cbf_approach_nominal}
    \dot{x} = f_{0}(x) := f(x)+u_0(x)
\end{equation}
\begin{assumption}\label{assum:pds-based-control}
We impose the following assumptions:
\begin{itemize}
    \item $f_{0}$ is locally Lipschitz. Moreover, the origin is the only equilibrium of \eqref{eq:cbf_approach_nominal} in $\Sset$. Finally, $-f_{0}$ is strongly $G$-monotone, with $G\in\mathbb{S}^n_+$.
    %Finally, the system $\dot{x}=f_{0}(x)$ is incrementally stable, with Lyapunov function $V(x)=x^\top P x$, where $P\in\mathbb{S}_+^n$.
    \item $\Sset$ is convex.
\end{itemize}
\end{assumption}
In particular, strong $G$-monotonicity of $-f_{0}$ implies that \eqref{eq:cbf_approach_nominal} is \emph{quadratically incrementally stable}\footnote{Incremental stability is a form of stability where trajectories of the system converge to each other. For the technical definition, see, e.g., \cite{heemels2020oblique}.}; i.e., $u_0$ has been designed to incrementally stabilize the control system. As the origin is an equilibrium, incremental stability implies global asymptotic stability of the origin, with Lyapunov function $V(x)=x^\top G x$. Convexity of $\Sset$ is needed to make use of Proposition \ref{prop:maurice_deltagas} below. Nonetheless, as discussed in Remark \ref{rem:simplifying_assumptions}, this assumption can be relaxed, by extending Proposition \ref{prop:maurice_deltagas} to \emph{prox-regular sets} (see \cite{hauswirth2021projected} for a definition). 

Given the nominal controller $u_0$, CBF-based control methods design a safe controller $u_{\mathrm{cbf},a}:\Sset\to\real^n$ as follows: 
\begin{equation}\label{eq:ucbf_QP}
    u_{\mathrm{cbf},a}(x):= \left\{\begin{aligned}\argmin_v \quad&\|v-u_0(x)\|^2_P\\
    \mathrm{s.t.:} \quad &\lie_{f+v} h(x) + ah(x) \geq 0\end{aligned}\right. 
\end{equation}
where both $P\in\mathbb{S}^n_+$ and $a>0$ are subject to design. From CBF theory \cite{ames2019control_review} we know that $u_{\mathrm{cbf},a}$ renders $\Sset$ forward-invariant for the closed-loop (safety), for any $P$ and $a$. Here, we seek for a selection of $P$ and $a$ such that $u_{\mathrm{cbf},a}$, retains asymptotic stability, on top of its safety properties.

Towards this, we employ tools from PDS theory in combination with Theorem \ref{thm:main_result}. First, we recall the following:
\begin{proposition}[\hspace{-.1mm}{\cite[Theorem 2]{heemels2020oblique}} adapted]\label{prop:maurice_deltagas}
    Let Assumptions \ref{assum:sset_and_h} and \ref{assum:pds-based-control} hold. Then $-F_0(x):=-f_{0}(x)+G^{-1}N_{\Sset}(x)$ is strongly $G$-monotone.
\end{proposition}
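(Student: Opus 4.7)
The plan is to verify strong $G$-monotonicity of $-F_0$ directly from the definition, decomposing the inner product into a smooth contribution from $-f_0$ and a set-valued contribution from the normal cone term. Concretely, I would fix $x,y\in\Sset$, pick arbitrary $v_x\in -F_0(x)$, $v_y\in-F_0(y)$, and write $v_x=-f_0(x)+G^{-1}\eta_x$, $v_y=-f_0(y)+G^{-1}\eta_y$, for some $\eta_x\in N_{\Sset}(x)$, $\eta_y\in N_{\Sset}(y)$. Then I would compute
\begin{equation*}
\langle x-y\mid v_x-v_y\rangle_G = \langle x-y\mid -f_0(x)-(-f_0(y))\rangle_G + (x-y)^\top(\eta_x-\eta_y),
\end{equation*}
where the second term simplifies because the $G$ in the inner product cancels with $G^{-1}$ in the normal-cone contribution, leaving a plain Euclidean pairing.

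The first term is bounded below by $\alpha\|x-y\|_G^2$ by Assumption \ref{assum:pds-based-control}, since $-f_0$ is assumed strongly $G$-monotone with some modulus $\alpha>0$. For the second term, the key fact to invoke is that the normal cone operator $N_{\Sset}$ of a convex set $\Sset$ is monotone in the standard Euclidean sense, i.e.\ $(x-y)^\top(\eta_x-\eta_y)\geq 0$ whenever $\eta_x\in N_{\Sset}(x)$ and $\eta_y\in N_{\Sset}(y)$. This is a classical consequence of the characterisation $\eta\in N_{\Sset}(x)\iff \eta^\top(z-x)\leq0$ for all $z\in\Sset$: applying this twice with $z=y$ and $z=x$ and adding yields the inequality. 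Summing the two bounds gives $\langle x-y\mid v_x-v_y\rangle_G\geq \alpha\|x-y\|_G^2$ for arbitrary selections, which is exactly strong $G$-monotonicity of $-F_0$ with the same modulus $\alpha$.

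I do not expect a substantial obstacle. The argument is essentially a clean separation of the smooth and the set-valued parts: the cross term vanishes because $G\cdot G^{-1}=I$, the smooth part inherits strong monotonicity by hypothesis, and the normal-cone part inherits (ordinary) monotonicity from convexity of $\Sset$. The only subtle point worth stating explicitly in the write-up is why convexity of $\Sset$ is used exactly at the step where $N_{\Sset}$ is shown to be monotone; this also pinpoints the reason why, as hinted in Remark \ref{rem:simplifying_assumptions}, an extension to prox-regular sets would require replacing plain monotonicity of $N_{\Sset}$ with its hypomonotone counterpart and tracking the resulting loss in the modulus $\alpha$.
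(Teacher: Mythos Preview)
Your proposal is correct. The paper does not supply its own proof of this proposition; it merely cites \cite[Theorem 2]{heemels2020oblique} and states the result without argument. Your direct verification---splitting $\langle x-y\mid v_x-v_y\rangle_G$ into the smooth part handled by the assumed strong $G$-monotonicity of $-f_0$, plus the normal-cone part where the cancellation $G\cdot G^{-1}=I$ reduces the question to ordinary Euclidean monotonicity of $N_{\Sset}$ (which follows from convexity of $\Sset$)---is exactly the standard argument behind the cited result and goes through without issue. Your remark on where convexity enters and how prox-regularity would relax it to hypomonotonicity is also accurate and aligns with the paper's own Remark~\ref{rem:simplifying_assumptions}.
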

The above proposition implies that, under its assumptions, any well-posed (in the sense of existence of solutions) system $\dot{x} = g(x)$, with $g(x)\in F_0(x) = f_{0}(x)-G^{-1}N_{\Sset}(x)$ is incrementally stable. In fact, it can be shown that, under suitable assumptions on $\sigma(a,x)$ (which, e.g., are satisfied in the numerical example of Section \ref{sec:numerical}), there exists some $a_{\mathrm{stable}}>0$ such that any system $\dot{x}=g(x)\in K_a(F_0(x))$ with $a\geq a_{\mathrm{stable}}$ is asymptotically stable, where $K_a(F_0(x))$ and $\sigma(a,x)$ are given by Theorem \ref{thm:main_result}.

Moreover, we observe that the closed-loop vector-field $f_{\mathrm{cl}} = f(x)+u_{\mathrm{cbf},a}(x)$ can be written as
\begin{equation}\label{eq:closed_loop_cbf}
    \dot{x} = f_{\mathrm{cl}}(x) := \left\{\begin{aligned}\argmin_\mu \quad&\|\mu-f_0(x)\|^2_P\\
    \mathrm{s.t.:} \quad &\lie_\mu h(x) + ah(x) \geq 0\end{aligned}\right. 
\end{equation}
which is like $f_{\mathrm{cbf},a}$ from \eqref{eq:f_cbf}, where $f$ is replaced by $f_0$.

The above suggest that the following selection for $P$ and $a$:
\begin{equation}\label{eq:P,a selection}
    P:=G, \quad a\geq\{a_*,a_{\mathrm{stable}}\}
\end{equation}
guarantees asymptotic stability for the CBF-controlled closed-loop $\dot{x} = f_{\mathrm{cl}}(x)$. That is because, by Theorem \ref{thm:main_result}, since $a\geq a_*$ and $P=G$, we have that $f_{\mathrm{cl}}(x)\in K_a(F_0(x))$, where $F_0(x) = f_{0}(x)-G^{-1}N_{\Sset}(x)$. Then, from what we mentioned above, since $a\geq a_{\mathrm{stable}}$, we know that $\dot{x}=f_{\mathrm{cl}}(x)\in K_a(F_0(x))$ is asymptotically stable.

Evidently, apart from the sufficiently large value of $a$, particular care is needed in choosing the cost matrix $P$ of the QP in \eqref{eq:ucbf_QP}: \emph{$P$ has to be equal to $G$, which comes from the Lyapunov function of the nominal system}. In fact, the numerical example in Section \ref{sec:numerical} shows that a wrong choice of $P$ in \eqref{eq:ucbf_QP} may result in stable undesired equilibria and loss of stability. That is because poor choice of $P$ might imply that the PDS $\dot{x}\in f_0(x) - P^{-1}N_{\Sset}(x)$ that is associated to the CBF-controlled system \eqref{eq:closed_loop_cbf} is not incrementally stable (see \cite[Example 1]{heemels2020oblique}), thus possibly implying instability of the CBF-dynamics, since PDS and CBF-dynamics are intimately related, as Theorem \ref{thm:main_result} suggests.

\begin{remark}\label{rem:new_perspectives}
    The proposed controller is similar to what is proposed in \cite{wences2022compatibility}, albeit in the simple scenario of $\dot{x}=f(x)+u$. Nonetheless, in \cite{wences2022compatibility}, no method on how to design the QP's cost matrix is given (except for a class of mechanical systems). This highlights how Theorem \ref{thm:main_result} and the connection between PDSs and CBFs provide new perspectives on how to design CBF-based controllers. 
\end{remark}
\begin{remark}\label{rem:simplifying_assumptions}
    To extend our results to general control-affine dynamics $\dot{x}=f(x)+g(x)u(x)$, that is typically considered in the CBF literature, we would have to consider PDSs with positive-semidefinite projection matrices $P$ or state-dependent ones $P(x)\in\mathbb{S}^n_+$. Moreover, the convexity assumption on $\Sset$ can be relaxed, by extending Proposition \ref{prop:maurice_deltagas} to e.g. prox-regular sets (for the definition of prox-regularity, see \cite{hauswirth2021projected}), by employing hypomonotonicity of their normal cones. These considerations are left for future work, as this work's main aim is to present Theorem \ref{thm:main_result} and provide a discussion on its possible implications. 
\end{remark}

 \subsection{Numerical Example}\label{sec:numerical}
 Consider the following 2-D control system, which has also been considered in \cite{mestres2022optimization} and \cite{yi2022complementarity_cbf}: 
 \begin{equation*}
     \dot{x} = x + u
 \end{equation*}
 where $x,u\in\real^2$. The nominal controller $u_0(x) = \begin{pmatrix}-2x_1-4x_2 &x_1-x_2\end{pmatrix}^\top$ incrementally stabilizes the nominal system $\dot{x}=f_{0}(x) = x+u_0(x)$ ($-f_0$ is strongly $G$-monotone), with a corresponding Lyapunov function $V(x)=x^\top G x$, where 
 \begin{equation}\label{eq:Pmatrix}
     G = \begin{pmatrix}
     0.625 &0.125\\ 0.125 &2.625
    \end{pmatrix}
\end{equation} 
Consider, also, the ellipsoidal set $\Sset = \{x\in\real^2\mid h(x)\geq 0\}$, where
$$h(x) = 9 - x^\top\begin{pmatrix}
    3 &2\\ 2 &2
\end{pmatrix} x$$
Observe that both $\Sset$ and $f_{0}$ satisfy Assumptions \ref{assum:sset_and_h} and \ref{assum:pds-based-control}. We seek to design a controller $u_{\mathrm{cbf},a}:\Sset\to\real^2$, such that the closed-loop system $\dot{x} = f_{\mathrm{cl}}(x) = x+u_{\mathrm{cbf},a}(x)$ is asymptotically stable, with $f_{\mathrm{cl}}(0)=0$, and the set $\Sset$ is forward invariant (i.e., safety is imposed).

 Figure \ref{fig:figure} depicts two trajectories, with the same initial condition: a) one that corresponds to the closed-loop system with the controller designed as in \eqref{eq:ucbf_QP} with the correct $P$-matrix (i.e., the one suggested by \eqref{eq:P,a selection}, which is given by \eqref{eq:Pmatrix}), and b) one that corresponds to a closed-loop system with a controller designed as in \eqref{eq:ucbf_QP}, but with a wrong $P$-matrix, namely $\begin{pmatrix}
     3 &0\\ 0 &1
 \end{pmatrix}$. In both cases, the parameter $a$ has been chosen as $a=1$. \emph{It is clear from the figure that the proposed controller (case a) safely stabilizes the origin, whereas the controller with the ``wrong" $P$-matrix (case b) introduces a stable undesired equilibrium at the boundary of $\Sset$ (at approximately $(-2.985,2.777)$). That is because in case (a) the associated PDS is incrementally stable, whereas in case (b) it is not.} This demonstrates the power of the insights that Theorem \ref{thm:main_result} provides, when designing CBF-based controllers.
\begin{figure}[h!]
     \centering
     \includegraphics[width=3.3in]{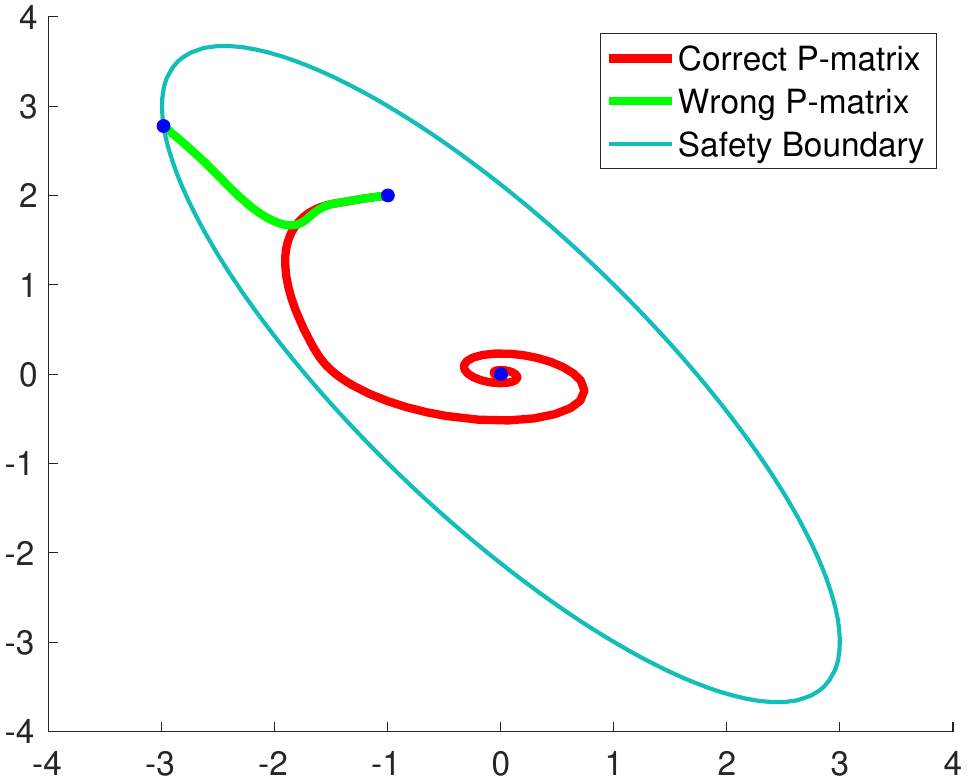}
     \caption{Trajectories of the closed-loop system for two different controllers: a) controller designed as in \eqref{eq:ucbf_QP} with the correct $P$-matrix (i.e., the one given by \eqref{eq:Pmatrix}), and b) controller designed as in \eqref{eq:ucbf_QP}, but with a wrong $P$-matrix. Initial condition is $(-1,2)$. The correct controller (case a) safely stabilizes the origin, whereas the wrong controller (case b) introduces a stable undesired equilibrium at $(-2.985,2.777)$.}
     \label{fig:figure}
 \end{figure}
 
 Remarkably, in the experiments it appeared that, irrespective of the value of $a$, the correct controller was able to stabilize the origin. This might be due to the linearity of the dynamics combined with the ellipsoidal safe set, as well as because $\sigma(a,x)$ in Theorem \ref{thm:main_result} has been derived in a conservative way (e.g., employing Lipschitz constants). Further investigations on this are subject of future work.

\section{Conclusion and Future Work}\label{sec:future_work}
We have proven that CBF-controlled systems are Krasovskii-like perturbations of PDSs, and provided a quantitative bound on that perturbation. This provides novel perspectives on analyzing and designing CBF-based controllers for safe stabilization. In the simple scenario of convex safe sets and complete control over the dynamics $\dot{x} = f(x) + u$, we have demonstrated how it can be used to design safe CBF-based controllers that preserve asymptotic stability of the origin and do not introduce undesired equilibria or limit cycles. Future work will focus on formalizing the design method of Section \ref{sec:design}, and extending it to more general dynamics and safe sets.

Finally, the implications of Theorem \ref{thm:main_result} extend beyond the design of CBF-based controllers. Following steps similar to \cite{hauswirth2020anti}, it can be shown that trajectories of the CBF-controlled system \eqref{eq:sys_cbf} uniformly converge to trajectories of the PDS \eqref{eq:sys_pds}. Thus, CBF-controlled systems may serve as continuous implementations/approximations of discontinuous PDSs. This is of particular interest for projection-based controllers \cite{deenen2021projection,heemels2023existence,shi2022negative,lorenzetti2022pi,fu2023novel}, as continuous implementations of such controllers might provide additional robustness.

\section*{Acknowledgements} The authors would like to thank Dr. Gabriel Gleizer for fruitful discussions on this work.

\section*{Appendix: Technical Lemmas and Proof of Theorem \ref{thm:main_result}}
In the following, we denote $U_{\mathrm{cbf},a}:=\{z\in\Sset\mid\lie_f h(z) + ah(z)\leq 0\}$. To prove Theorem \ref{thm:main_result}, we make use of the following Lemmas: %(their proofs are in the Appendix):
\begin{lemma}\label{lem:neighbourhood}
    Given $a>0$, consider any $x\in U_{\mathrm{cbf},a}$ and any $y\in\proj_{\partial\Sset}(x)$. It holds that
    \begin{equation*}
        \|x-y\|\leq\gamma(\frac{1}{a}|\lie_f h(x)|)
    \end{equation*}
\end{lemma}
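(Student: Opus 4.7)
The plan is to chain three observations: an upper bound on $h(x)$ coming from membership in $U_{\mathrm{cbf},a}$, the inequality from Assumption \ref{assum:sset_and_h} item \ref{assum:Kinf} that relates the distance $d(x,\partial \Sset)$ to $|h(x)|$, and the fact that a projection onto $\partial \Sset$ realises this distance. The monotonicity of the class-$\mathcal{K}_\infty$ function $\gamma$ will glue everything together.

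First, I would extract the consequence of $x \in U_{\mathrm{cbf},a}$, namely $\lie_f h(x) + a h(x) \leq 0$, which rearranges to $a h(x) \leq -\lie_f h(x) \leq |\lie_f h(x)|$. Since $x \in \Sset$ we have $h(x) \geq 0$, so in fact $0 \leq h(x) \leq \tfrac{1}{a}|\lie_f h(x)|$, and in particular $|h(x)| = h(x)$.

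Next, by Assumption \ref{assum:sset_and_h} item \ref{assum:Kinf}, $d(x,\partial\Sset) \leq \gamma(|h(x)|)$. Because $y \in \proj_{\partial \Sset}(x)$ realises the minimum in the definition of the Euclidean distance to $\partial \Sset$, we have $\|x-y\| = d(x,\partial\Sset)$. Combining with the previous display and using that $\gamma$ is strictly increasing yields
\begin{equation*}
\|x-y\| = d(x,\partial\Sset) \leq \gamma(h(x)) \leq \gamma\!\left(\tfrac{1}{a}|\lie_f h(x)|\right),
\end{equation*}
which is the desired bound.

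There is no real obstacle here; the lemma is essentially a one-line computation once the two hypotheses are invoked in the right order. The only subtlety worth stating explicitly in the proof is that $x \in \Sset$ forces $h(x) \geq 0$, so that the sign manipulations above are valid and $|h(x)|$ equals $h(x)$, allowing the $\mathcal{K}_\infty$ bound to be applied directly.
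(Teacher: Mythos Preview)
Your proof is correct and follows exactly the same approach as the paper's: combine the bound $0 \leq h(x) \leq \tfrac{1}{a}|\lie_f h(x)|$ (from $x\in U_{\mathrm{cbf},a}\subseteq\Sset$) with Assumption~\ref{assum:sset_and_h} item~\ref{assum:Kinf} and the monotonicity of $\gamma$. Your write-up is in fact more explicit than the paper's one-line sketch.
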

\begin{proof}[Proof of Lemma \ref{lem:neighbourhood}] It is proven by combining Assumption \ref{assum:sset_and_h} item \ref{assum:Kinf} with the fact that $0\leq h(x) \leq -\frac{1}{a}\lie_f h(x)$.
\end{proof}

\begin{lemma}\label{lem:nonzero_grad}
For any $a\geq a_*$, for all $x\in U_{\mathrm{cbf},a}$, we have   \begin{equation*}
    0<\epsilon\leq\|\nabla h(x)\|\leq M_3
\end{equation*}
\end{lemma}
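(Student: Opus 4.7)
The plan is to bound $\|\nabla h(x)\|$ for $x \in U_{\mathrm{cbf},a}$ by comparing it to $\|\nabla h(y)\|$ for some $y \in \proj_{\partial\Sset}(x)$, using the Lipschitz property of $\nabla h$ together with the distance bound from Lemma \ref{lem:neighbourhood}. The definition of $a_*$ is engineered precisely so that the correction term stays strictly below $M_1 - \epsilon$.

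First, I would pick any $x \in U_{\mathrm{cbf},a}$ and any $y \in \proj_{\partial\Sset}(x)$. Since $a \geq a_* > 0$ and $x \in U_{\mathrm{cbf},a}$, Lemma \ref{lem:neighbourhood} gives $\|x - y\| \leq \gamma\!\left(\tfrac{1}{a}|\lie_f h(x)|\right)$. Since $\gamma$ is strictly increasing and $a \geq a_*$, and since $|\lie_f h(x)| \leq \max_{z \in \Sset} |\lie_f h(z)|$, we obtain the uniform bound
\[
\|x-y\| \;\leq\; \gamma\!\left(\tfrac{1}{a}|\lie_f h(x)|\right) \;\leq\; \gamma\!\left(\tfrac{1}{a_*}\max_{z\in\Sset}|\lie_f h(z)|\right).
\]

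Next, I would use the Lipschitz continuity of $\nabla h$ on $\Sset$ (Assumption \ref{assum:sset_and_h}, item \ref{assum:h_C11}): $\|\nabla h(x) - \nabla h(y)\| \leq L_{\nabla h}\|x-y\|$. The reverse triangle inequality then yields both
\[
\|\nabla h(x)\| \;\leq\; \|\nabla h(y)\| + L_{\nabla h}\|x-y\| \;\leq\; M_2 + L_{\nabla h}\gamma\!\left(\tfrac{1}{a_*}\max_{z\in\Sset}|\lie_f h(z)|\right) \;=\; M_3,
\]
which gives the upper bound, and
\[
\|\nabla h(x)\| \;\geq\; \|\nabla h(y)\| - L_{\nabla h}\|x-y\| \;\geq\; M_1 - L_{\nabla h}\,\gamma\!\left(\tfrac{1}{a}|\lie_f h(x)|\right).
\]

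For the lower bound, I need the right-hand side to be at least $\epsilon$. Rearranging (using that $\gamma \in \mathcal{K}_\infty$ is invertible on $[0,\infty)$), this is equivalent to $a \geq \frac{|\lie_f h(x)|}{\gamma^{-1}((M_1 - \epsilon)/L_{\nabla h})}$. Since $|\lie_f h(x)| \leq \max_{z \in \Sset}|\lie_f h(z)|$, this is in turn implied by $a \geq a_*$, which is exactly the hypothesis. The choice $\epsilon < M_1$ ensures $(M_1-\epsilon)/L_{\nabla h} > 0$, so that $\gamma^{-1}$ is well-defined and positive, and thus $a_*$ is finite. This closes the proof.

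There is no serious obstacle here; the argument is essentially a Lipschitz-perturbation estimate, and the main thing is to verify that the definition of $a_*$ is precisely the threshold that makes the lower bound work uniformly over $x \in U_{\mathrm{cbf},a}$.
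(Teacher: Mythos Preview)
Your proposal is correct and follows exactly the approach the paper indicates: combine the Lipschitz continuity of $\nabla h$ on $\Sset$ (Assumption \ref{assum:sset_and_h}, item \ref{assum:h_C11}) with the (reverse) triangle inequality and the distance bound from Lemma \ref{lem:neighbourhood}. Your argument simply fills in the details that the paper leaves implicit, and the verification that the definition of $a_*$ is precisely the threshold making the lower bound $\epsilon$ work is exactly the intended computation.
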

\begin{proof}[Proof of Lemma \ref{lem:nonzero_grad}] It is proven by combining Assumption \ref{assum:sset_and_h} item \ref{assum:h_C11}, the triangle inequality and Lemma \ref{lem:neighbourhood}.
\end{proof}

\begin{lemma}\label{lem:normalized_grad_lipschitz}
    Given any $a\geq a_*$, it holds that, for any $x\in U_{\mathrm{cbf},a}$ and $y\in\mathrm{proj}_{\partial\Sset}(x)$
    \begin{equation*}
        \Big\|\frac{P^{-1}\nabla h(x)}{\|\nabla h(x)\|^2_{P^{-1}}} - \frac{P^{-1}\nabla h(y)}{\|\nabla h(y)\|^2_{P^{-1}}}\Big\|\leq L_1\|x-y\|
    \end{equation*}
    where
    \begin{equation*}
        L_1 := \frac{\overline{\lambda}(P)}{\underline{\lambda}(P)\epsilon^2}L_{\nabla h}\biggl[1 + \frac{M_2\overline{\lambda}(P)\Big(M_2+M_3\Big)}{\underline{\lambda}(P)M_1^2}\bigg]
    \end{equation*}
\end{lemma}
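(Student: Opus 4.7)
The plan is to expand the target difference into two parts by adding and subtracting a hybrid term, and then bound each part separately using the Lipschitz property of $\nabla h$ together with the lower bounds on $\|\nabla h\|$ supplied by Lemma \ref{lem:nonzero_grad}. Writing $u:=\nabla h(x)$ and $v:=\nabla h(y)$, I will use the algebraic identity
\begin{equation*}
\frac{P^{-1}u}{\|u\|^{2}_{P^{-1}}} - \frac{P^{-1}v}{\|v\|^{2}_{P^{-1}}}
= \frac{P^{-1}(u-v)}{\|u\|^{2}_{P^{-1}}} \;+\; P^{-1}v\left(\frac{1}{\|u\|^{2}_{P^{-1}}} - \frac{1}{\|v\|^{2}_{P^{-1}}}\right)
\end{equation*}
so that one term measures the displacement of the numerator and the other the displacement of the (scalar) denominator. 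Throughout, I will repeatedly use the operator-norm bound $\|P^{-1}w\|\leq \|w\|/\underline{\lambda}(P)$ and the quadratic-form bound $\|w\|^{2}_{P^{-1}}\geq \|w\|^{2}/\overline{\lambda}(P)$.

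For the first (numerator) term, Assumption \ref{assum:sset_and_h}.\ref{assum:h_C11} gives $\|u-v\|\leq L_{\nabla h}\|x-y\|$, and Lemma \ref{lem:nonzero_grad} yields $\|u\|\geq\epsilon$, hence $\|u\|^{2}_{P^{-1}}\geq \epsilon^{2}/\overline{\lambda}(P)$. Combining these produces a bound of the form $\frac{\overline{\lambda}(P)}{\underline{\lambda}(P)\epsilon^{2}}L_{\nabla h}\|x-y\|$, which is exactly the ``$1$'' contribution inside the bracket of $L_1$.

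For the second (denominator) term, I will rewrite the difference of squared $P^{-1}$-norms by polarization:
\begin{equation*}
\|v\|^{2}_{P^{-1}}-\|u\|^{2}_{P^{-1}} \;=\; (v-u)^{\top}P^{-1}(v+u),
\end{equation*}
and bound its magnitude by $\tfrac{1}{\underline{\lambda}(P)}\|v-u\|(\|u\|+\|v\|)$. Then $\|u\|\leq M_{3}$ from Lemma \ref{lem:nonzero_grad} and $\|v\|\leq M_{2}$ because $y\in\partial\Sset$; the factor $\|P^{-1}v\|$ in front is bounded by $M_{2}/\underline{\lambda}(P)$; and the two reciprocals of squared norms are bounded using $\|u\|^{2}_{P^{-1}}\geq \epsilon^{2}/\overline{\lambda}(P)$ and, crucially, $\|v\|^{2}_{P^{-1}}\geq M_{1}^{2}/\overline{\lambda}(P)$ which uses $y\in\partial\Sset$ together with the definition of $M_1$. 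Multiplying everything out yields exactly the second summand $\frac{M_{2}\overline{\lambda}(P)(M_{2}+M_{3})}{\underline{\lambda}(P)M_{1}^{2}}$ inside the bracket of $L_1$, multiplied by the same prefactor $\frac{\overline{\lambda}(P)}{\underline{\lambda}(P)\epsilon^{2}}L_{\nabla h}\|x-y\|$. Summing the two bounds gives the claimed inequality.

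I do not expect any genuine obstacle here: the estimate is a careful bookkeeping of constants. The only two places requiring attention are (i) remembering that the lower bound $\|u\|\geq\epsilon$ is guaranteed only on $U_{\mathrm{cbf},a}$ with $a\geq a_{*}$ (which is precisely the hypothesis of Lemma \ref{lem:nonzero_grad}), and (ii) keeping the weighted-norm and Euclidean-norm conversions consistent throughout so that the factors of $\overline{\lambda}(P)$ and $\underline{\lambda}(P)$ in $L_1$ assemble correctly.
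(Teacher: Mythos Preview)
Your proposal is correct and follows essentially the same route as the paper: the same add-and-subtract split into a numerator term and a denominator term, bounded via the Lipschitz constant of $\nabla h$, Lemma~\ref{lem:nonzero_grad}, and the eigenvalue conversions between $\|\cdot\|$ and $\|\cdot\|_{P^{-1}}$. The only cosmetic difference is that you bound $\|v\|^{2}_{P^{-1}}-\|u\|^{2}_{P^{-1}}$ via the polarization identity $(v-u)^{\top}P^{-1}(v+u)$, whereas the paper factors it as $(\|v\|_{P^{-1}}+\|u\|_{P^{-1}})(\|v\|_{P^{-1}}-\|u\|_{P^{-1}})$ and then applies the reverse triangle inequality; both paths yield the identical constant $\tfrac{1}{\underline{\lambda}(P)}\|u-v\|(\|u\|+\|v\|)$, so the final $L_{1}$ assembles the same way.
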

\begin{proof}[Proof of Lemma \ref{lem:normalized_grad_lipschitz}]
We have the following:
{%\allowdisplaybreaks
    \begin{align*}
    \Big\|\frac{P^{-1}\nabla h(x)}{\|\nabla h(x)\|^2_{P^{-1}}} - \frac{P^{-1}\nabla h(y)}{\|\nabla h(y)\|^2_{P^{-1}}}\Big\|=&\\
    \Big\|\frac{P^{-1}\nabla h(x)\|\nabla h(y)\|^2_{P^{-1}} - P^{-1}\nabla h(y)\|\nabla h(x)\|^2_{P^{-1}}}{\|\nabla h(x)\|^2_{P^{-1}}\|\nabla h(y)\|^2_{P^{-1}}}\Big\|
    \leq&\\ 
    \overline{\lambda}(P^{-1})\Bigg(\frac{\Big\|\nabla h(x)\cancel{\|\nabla h(y)\|^2_{P^{-1}}} - \nabla h(y)\cancel{\|\nabla h(y)\|^2_{P^{-1}}}\Big\|}{\|\nabla h(x)\|^2_{P^{-1}}\cancel{\|\nabla h(y)\|^2_{P^{-1}}}} +&\\
    +\frac{\Big\|\nabla h(y)\|\nabla h(y)\|^2_{P^{-1}} - \nabla h(y)\|\nabla h(x)\|^2_{P^{-1}}\Big\|}{\|\nabla h(x)\|^2_{P^{-1}}\|\nabla h(y)\|^2_{P^{-1}}}\Bigg)
    \leq&\\
    \overline{\lambda}(P^{-1})\Bigg(\frac{\|\nabla h(x)\ - \nabla h(y)\|}{\|\nabla h(x)\|^2_{P^{-1}}} +&\\
    \Big\|\nabla h(y)\Big\|\Big|\|\nabla h(y)\|_{P^{-1}} + \|\nabla h(x)\|_{P^{-1}}\Big|
    \cdot\frac{\Big|\|\nabla h(y)\|_{P^{-1}} - \|\nabla h(x)\|_{P^{-1}}\Big|}{\|\nabla h(x)\|^2_{P^{-1}}\|\nabla h(y)\|^2_{P^{-1}}}\Bigg)\leq&\\
    \overline{\lambda}(P^{-1})\bigg[\frac{\|\nabla h(x)-\nabla h(y)\|}{\underline{\lambda}(P^{-1})\epsilon^2}+
    M_2\overline{\lambda}(P^{-1/2})\Big(M_2+M_3\Big)
    \cdot\frac{\overline{\lambda}(P^{-1/2})\|\nabla h(x)-\nabla h(y)\|}{\underline{\lambda}^2(P^{-1})\epsilon^2M_1^2}    \bigg]\leq&\\
    \frac{\overline{\lambda}(P)}{\underline{\lambda}(P)\epsilon^2}L_{\nabla h}\biggl[1 + \frac{M_2\overline{\lambda}(P)\Big(M_2+M_3\Big)}{\underline{\lambda}(P)M_1^2}\bigg]\|x-y\|\quad&%\\
\end{align*}}
where in the third inequality we used Lemma \ref{lem:nonzero_grad}.
\end{proof}

Let us proceed to the proof of Theorem \ref{thm:main_result}.
\begin{proof}[Proof of Theorem \ref{thm:main_result}] 
First, due to Assumption \ref{assum:sset_and_h} items \ref{assum:h_C11} and \ref{assum:regularity}, it holds that (see, e.g., \cite[Example 2.10]{hauswirth2021projected}):
\begin{equation*}
    N_{\Sset}(x)=\left\{\begin{aligned}
        &\{0\}, &\quad x\in\mathrm{Int}(\Sset)\\
        &\{\lambda\nabla h(x)\mid\lambda\leq0\}, &\quad x\in\partial\Sset
    \end{aligned}\right.
\end{equation*}
We distinguish the following cases:

\paragraph{Case 1} $x\in\Sset$ and $\lie_f h(x) + ah(x)>0$. Here we have $f_{\mathrm{cbf},a}(x)=f(x)\in F(x)$, and the result holds trivially.

\paragraph{Case 2} $x\in\Sset$ and $\lie_f h(x) + ah(x)\leq 0$. In this case, we can write
\begin{equation}
    f_{\mathrm{cbf},a}(x) = f(x) - \frac{\lie_f h(x) + ah(x)}{\|\nabla h(x)\|^2_{P^{-1}}}P^{-1}\nabla h(x)
\end{equation}
which is well-defined, as $\nabla h(x)\neq 0$, from Lemma \ref{lem:nonzero_grad}. Notice that, since $h(x)\geq0$ and $\lie_f h(x) + ah(x)\leq 0$, we have that $|\lie_f h(x)+ah(x)|\leq |\lie_f h(x)|$. 

Consider any $y\in\mathrm{proj}_{\partial\Sset}(x)$. Observe that $\eta = \Big(\lie_f h(x) + ah(x)\Big)\frac{\nabla h(y)}{\|\nabla h(y)\|^2_{P^{-1}}}\in N_{\Sset}(y)$, since $\frac{\lie_f h(x) + ah(x)}{\|\nabla h(y)\|^2_{P^{-1}}}\leq 0$. Also, $\nabla h(y) \neq 0$, due to Assumption \ref{assum:sset_and_h} item \ref{assum:h_C11}. We have the following:
\begin{align*}
    \Big\|f(y) - P^{-1}\eta - f(x) + \frac{\lie_f h(x) + ah(x)}{\|\nabla h(x)\|^2_{P^{-1}}}P^{-1}\nabla h(x)\Big\|\leq&\\
    \|f(y)-f(x)\| +|\lie_f h(x) + ah(x)|\Big\|\frac{P^{-1}\nabla h(y)}{\|\nabla h(y)\|^2_{P^{-1}}}-\frac{P^{-1}\nabla h(x)}{\|\nabla h(x)\|^2_{P^{-1}}}\Big\|\leq&\\
    L_f\|x-y\|+|\lie_f h(x) + ah(x)|L_1\|x-y\|\leq&\\
    L_f\|x-y\|+|\lie_f h(x)|L_1\|x-y\|\quad\leq&\\
    \underbrace{\Big(L_f+L_1|\lie_f h(x)|\Big)\gamma(\frac{1}{a}|\lie_f h(x)|)}_{\sigma_1(a,x)}\quad
\end{align*}
where in the second inequality we used Assumption \ref{assum:f_lipschitz}, in the third inequality we used Lemma \ref{lem:normalized_grad_lipschitz}, in the fourth inequality we used that $|\lie_f h(x)+ah(x)|\leq|\lie_f h(x)|$, and in the fifth inequality we used Lemma \ref{lem:neighbourhood}. Finally, from the above we get:
\allowdisplaybreaks{\begin{align*}
    f_{\mathrm{cbf},a}(x) =&\\ 
    f(x) - \frac{\lie_f h(x) + ah(x)}{\|\nabla h(x)\|^2_{P^{-1}}}P^{-1}\nabla h(x) \in&\\
    f(y) - P^{-1}\eta+\sigma_1(a,x)\ball\subseteq&\\
    f(y) - P^{-1}N_{\Sset}(y)+\sigma_1(a,x)\ball\subseteq&\\
    f\Big((x+\gamma(\frac{1}{a}|\lie_f h(x)|)\ball)\cap\Sset\Big) - P^{-1}N_{\Sset}\Big((x+\gamma(\frac{1}{a}|\lie_f h(x)|)\ball)\cap\Sset\Big)+\sigma_1(a,x)\ball\subseteq&\\
    f\Big((x+\sigma(a,x)\ball)\cap\Sset\Big) - P^{-1}N_{\Sset}\Big((x+\sigma(a,x)\ball)\cap\Sset\Big)+\sigma(a,x)\ball=&\\
    F\Big((x+\sigma(a,x)\ball)\cap\Sset\Big) + \sigma(a,x)\ball\quad&
\end{align*}}
\end{proof}
\bibliography{mybib.bib}
\bibliographystyle{IEEEtran}
\end{document}